\newtheorem{theorem}{Theorem}
\newtheorem{corollary}[theorem]{Corollary}
\newtheorem{lemma}[theorem]{Lemma}
\newtheorem{defi}[theorem]{Definition}
\newtheorem{example}[theorem]{Example}
\let\set\mathbbm
\def\K{\set K}
\def\O{\mathrm{O}}
\def\deg{\operatorname{deg}}
\def\rf#1#2{#1^{\overline{#2}}}
\begin{document}

\title{Order-Degree Curves for Hypergeometric Creative~Telescoping}

\numberofauthors{2}

\author{%
 \alignauthor Shaoshi Chen\titlenote{Supported by the National Science Foundation (NFS) grant CCF-1017217.}\\[\smallskipamount]
      \affaddr{\strut Department of Mathematics}\\
      \affaddr{\strut NCSU}\\
      \affaddr{\strut Raleigh, NC 27695, USA}\\[\smallskipamount]
      \email{\strut schen21@ncsu.edu}
 \alignauthor \strut Manuel Kauers\titlenote{Supported by the Austrian Science Fund (FWF) grant Y464-N18.}\\[\smallskipamount]
      \affaddr{\strut RISC}\\
      \affaddr{\strut Johannes Kepler University}\\
      \affaddr{\strut 4040 Linz, Austria}\\[\smallskipamount]
      \email{\strut mkauers@risc.jku.at}
}

\maketitle
\begin{abstract}
    Creative telescoping applied to a bivariate proper hypergeometric term
    produces linear recurrence operators with polynomial coefficients, called
    telescopers. We provide bounds for the degrees of the polynomials appearing
    in these operators. Our bounds are expressed as curves in the {$(r,d)$-plane}
    which assign to every order~$r$ a bound on the degree~$d$ of the telescopers. 
    These curves are hyperbolas, which reflect the phenomenon that higher order 
    telescopers tend to have lower degree, and vice versa.
\end{abstract}

\kern-\medskipamount

\category{I.1.2}{Computing Methodologies}{Symbolic and Algebraic Manipulation}[Algorithms]

\kern-\medskipamount

\terms{Algorithms}

\kern-\medskipamount

\keywords{Symbolic Summation, Creative Telescoping, Degree Bounds}

\kern-\medskipamount

\section{Introduction}

We consider the problem of finding linear recurrence equations with polynomial
coefficients satisfied by a given definite single sum over a proper
hypergeometric term in two variables. This is one of the classical problems
in symbolic summation. Zeilberger~\cite{zeilberger90} showed that such a
recurrence always exists, and proposed the algorithm now named after him for
computing one~\cite{zeilberger90a,zeilberger91}. Also explicit bounds are known
for the order of the recurrence satisfied by a given
sum~\cite{wilf92a,mohammed05,bostan10}. Little is known however about the
degrees of the polynomials appearing in the recurrence. These are investigated
in the present paper.

Ideally, we would like to be able to determine for a given sum all the pairs
$(r,d)$ such that the sum satisfies a linear recurrence of order~$r$ with
polynomial coefficients of degree at most~$d$. This is a hard question which we
do not expect to have a simple answer. The results given below can be viewed as
answers to simplified variants of the problem. One simplification is that we
restrict the attention to the recurrences found by creative
telescoping~\cite{zeilberger91}, called ``\emph{telescopers}'' in the symbolic summation
community (see Section~\ref{sec:hg} below for a definition). The second
simplification is that instead of trying to characterize all the pairs $(r,d)$,
we confine ourselves with sufficient conditions.

Our main results are thus formulas which provide bounds on the degree~$d$
of the polynomial coefficients in a telescoper, depending on its order~$r$. The
formulas describe curves in the $(r,d)$-plane with the property that for every
integer point $(r,d)$ above the curve, there is a telescoper of order~$r$ with
polynomial coefficients of degree at most~$d$. As the curves are hyperbolas,
they reflect the phenomenon that higher order recurrence equations may have
lower degree coefficients. This feature can be used to derive a complexity
estimate according to which, at least in theory, computing the minimum order
recurrence is more expensive than computing a recurrence with slightly higher
order (but drastically smaller polynomial coefficients). This phenomenon is
analogous to the situation in the differential case, which was first analyzed by
Bostan et al.~\cite{bostan07} for algebraic functions, and recently for
integrals of hyperexponential terms by the authors~\cite{chen11}.

Our analysis for non-rational proper hypergeometric input
(Section~\ref{sec:trans}) follows closely our analysis for the differential
case~\cite{chen11}. It turns out that the summation case considered here is
slightly easier than the differential case in that it requires fewer cases to
distinguish and in that the resulting degree estimation formula is much simpler
than its differential analogue. For rational input (Section~\ref{sec:rat}), we
derive a degree estimation formula following Le's algorithm for computing
telescopers of rational functions~\cite{abramov02,le03}.

\section{Proper Hypergeometric Terms and Creative Telescoping}\label{sec:hg}

Let $\K$ be a field of characteristic zero and let $\K(n,k)$ be the field of rational functions
in $n$ and~$k$. We will be considering extension fields $E$ of $\K(n,k)$ on which two
isomorphisms~$S_n$ and $S_k$ are defined which commute with each other,
leave every element of $\K$ fixed, and act on $n$ and $k$ via $S_n(n)=n+1$,
$S_k(n)=n$, $S_n(k)=k$, $S_k(k)=k+1$.
A \emph{hypergeometric term} is an element~$h$ of such an extension field~$E$ with
$S_n(h)/h\in\K(n,k)$ and $S_k(h)/h\in\K(n,k)$. Proper hypergeometric terms are hypergeometric
terms which can be written in the form
\begin{equation}\label{eq:hgdef}
 h=p\, x^n y^k \! \prod_{m=1}^M\frac{\Gamma(a_m n+a'_m k+a''_m)\Gamma(b_mn-b'_m k+b''_m)}
  {\Gamma(u_mn+u'_mk+u''_m)\Gamma(v_mn-v'_mk+v''_m)},
\end{equation}
where $p\in\K[n,k]$, $x,y\in\K$, $M\in\set N$ is fixed,
$a_m$, $a'_m$, $b_m$, $b'_m$, $u_m$, $u'_m$, $v_m$, $v'_m$ are nonnegative integers,
$a''_m$, $b''_m$, $u''_m$, $v''_m\in\K$ and the expressions $x^n$,
$y^k$, and $\Gamma(\ldots)$ refer to elements of $E$ on which $S_n$ and $S_k$
act as suggested by the notation, e.g.,
\begin{alignat*}1
  S_n(\Gamma(2n-k+1))&=(2 n{-}k{+}1)(2 n{-}k{+}2)\Gamma(2n-k+1),\\
  S_k(\Gamma(2n-k+1))&=\frac1{2n{-}k}\Gamma(2n-k+1).
\end{alignat*}
Throughout the paper the symbols $p$, $x$, $y$, $M$, $a_m$, $a_m'$, $a_m''$, \dots will be used
with the meaning they have in~\eqref{eq:hgdef}.
For fitting long formulas into the narrow columns of this layout, we also use the abbreviations
\begin{align*}
    A_m:=a_mn+a_m'k+a_m'', \quad&B_m:=b_mn+b_m'k+b_m'',\\
    U_m:=u_mn+u_m'k+u_m'', \quad&V_m:=v_mn+v_m'k+v_m''.
\end{align*}

We follow the paradigm of creative telescoping. For a given hypergeometric
term~$h$ as above, we want to determine polynomials
$\ell_0,\dots,\ell_r\in\K[n]$ (free of~$k$, not all zero), and a rational
function $C\in\K(n,k)$ (possibly involving~$k$, possibly zero), such that
\[
  \ell_0 h + \ell_1 S_n(h) + \cdots + \ell_r S_n^r(h) = S_k(C h) - C h.
\]
In this case, the operator $L:=\ell_0 + \ell_1 S_n + \cdots + \ell_r
S_n^r\in\K[n][S_n]$ is called a \emph{telescoper} for~$h$, and the rational
function $C\in\K(n,k)$ is called a \emph{certificate} for~$L$ (and~$h$). The
number~$r$ is called the \emph{order} of~$L$, and $d:=\max_{i=0}^r \deg_n\ell_i$
is called its \emph{degree}. If $h$ represents an actual sequence $f(n,k)$, then
a recurrence for the definite sum $\sum_{k=0}^n f(n,k)$ can be obtained from
such a pair $(L,C)$ as explained in the literature on symbolic
summation~\cite{petkovsek97}. We shall not embark on the technical subtleties of
this correspondence here but restrict ourselves to analyzing of the set of all pairs
$(r,d)$ for which there exists a telescoper of order $r$ and degree~$d$.

The following notation will be used.
\begin{itemize}
\item For $p\in\K[n,k]$ and $m\in\set N$, let
  \begin{alignat*}1
    \rf pm&:=p(p+1)(p+2)\cdots(p+m-1)
  \end{alignat*}
  with the conventions 
  $\rf p0=1$ and $\rf p1=p$.
\item For $p\in\K[n,k]$, $\deg_n p$ and $\deg_k p$ denote the degree of $p$ with respect
  to $n$ or~$k$, respectively. $\deg p$ without any subscript denotes the total degree of~$p$.
\item For $z\in\set R$, let $z^+:=\max\{0,z\}$.
\end{itemize}
With this notation, we have
\begin{alignat*}1
  \frac{S_n(h)}{h} &= x\frac{S_n(p)}{p}
    \prod_{m=1}^M \frac{\rf{A_m}{a_m}\rf{B_m}{b_m}}{\rf{U_m}{u_m}\rf{V_m}{v_m}},\\
  \frac{S_k(h)}{h} &= y\frac{S_k(p)}{p}
    \prod_{m=1}^M \frac{\rf{A_m}{a'_m}\rf{(V_m-v'_m)}{v'_m}}{\rf{U_m}{u'_m}\rf{(B_m-b'_m)}{b'_m}}.
\end{alignat*}

\section{The Non-Rational Case}\label{sec:trans}

We consider in this section the case where $h$ cannot be split into $h=q h_0$
for $q\in\K(n,k)$ and another hypergeometric term $h_0$ with
$S_k(h_0)/h_0=1$. Informally, this means that we exclude terms $h$ where $y=1$
and every $\Gamma$-term involving $k$ can be cancelled against another one to
some rational function. Those terms are treated separately in
Section~\ref{sec:rat} below.  If $h$ cannot be split as indicated, then also
$Ch$ cannot be split in this way, for any rational function $C\in\K(n,k)$. In
particular, we can then not have $Ch\in\K$ and therefore we always have
$S_k(Ch)-Ch\neq0$. This implies that whenever we have a pair $(L,C)$ with
$L(h)=S_k(Ch)-Ch$, we can be sure that $L$ is not the zero operator, and we need
not worry about this requirement any further.

The analysis in the present case is similar to that carried out by
Apagodu and Zeilberger~\cite{mohammed05}, who used it for deriving a bound on the order
$r$ of~$L$, and similar to our analysis~\cite{chen11} of the differential case.
The main idea is to follow step by step the execution of Zeilberger's
algorithm when applied to~$h$. This eventually leads to a linear system of
equations with coefficients in~$\K(n)$ which must have a solution whenever it is
underdetermined. The condition of having more variables than equations in this
linear system is the source of the estimate for choices $(r,d)$ that lead
to a solution.

\subsection{Zeilberger's Algorithm}

Recall the main steps of Zeilberger's algorithm: for some choice of~$r$, it makes
an ansatz $L=\ell_0+\ell_1 S_n+\cdots+\ell_r S_n^r$ with undetermined coefficients
$\ell_0,\dots,\ell_r$, and then calls Gosper's algorithm on~$L(h)$.
Gosper's algorithm~\cite{gosper78} proceeds by writing
\[
  \frac{S_k(L(h))}{L(h)}=\frac{S_k(P)}{P} \frac{Q}{S_k(R)}
\]
for some polynomials $P,Q,R$ such that $\gcd(Q,S_k^i(R))=1$ for all $i\in\set N$. It
turns out that the undetermined coefficients $\ell_0,\dots,\ell_r$ appear
linearly in~$P$ and not at all in $Q$ or~$R$. Next, the algorithm searches for
a polynomial solution $Y$ of the \emph{Gosper equation}
\[
  P = Q\,S_k(Y) - R\,Y
\]
by making an ansatz $Y=y_0+y_1k+y_2k^2+\cdots+y_sk^s$ for some suitably chosen
degree~$s$, substituting the ansatz into the equation, and comparing powers of $k$
on both sides. This leads to a linear system in the variables
$\ell_0,\dots,\ell_r,y_0,\dots,y_s$ with coefficients in~$\K(n)$. Any solution
of this system gives rise to a telescoper~$L$ with the corresponding certificate $C=RY/P$.
If no solution exists, the procedure is repeated with a greater value
of~$r$.

For a hypergeometric term~$h$ and an operator $L=\ell_0+\ell_1
S_n+\cdots+\ell_r S_n^r$, we have {\allowdisplaybreaks
\begin{alignat*}1
  L(h)&=\sum_{i=0}^r \ell_i x^i \frac{S_n^i(p)}p \prod_{m=1}^M \frac{\rf{A_m}{ia_m}\rf{B_m}{ib_m}}{\rf{U_m}{iu_m}\rf{V_m}{iv_m}} h\\
   &= \frac{\displaystyle\sum\limits_{i=0}^r \ell_i x^i S_n^i(p) \prod\limits_{m=1}^M {P_{i, m}}}
           {\displaystyle p\prod_{m=1}^M\rf{U_m}{ru_m}\rf{V_m}{rv_m}} h\\
   &=\biggl(\sum_{i=0}^r \ell_i x^i S_n^i(p) \prod\limits_{m=1}^M {P_{i, m}} \biggr)\\
   &\qquad{}\times x^n y^k \prod_{m=1}^M \frac{\Gamma(A_m)\Gamma(B_m)}{\Gamma(U_m+ru_m)\Gamma(V_m+rv_m)},\\
\intertext{where}
  P_{i,m}&=\rf{A_m}{ia_m}\rf{B_m}{ib_m}\rf{(U_m+iu_m)}{(r-i)u_m}\rf{(V_m+iv_m)}{(r-i)v_m}.
\end{alignat*}}%
We can write
\[
  \frac{S_k(L(h))}{L(h)}=\frac{S_k(P)}{P}\frac{Q}{S_k(R)},
\]
where
\begin{alignat*}1
   P &= \sum_{i=0}^r \ell_i x^i S_n^i(p) \prod\limits_{m=1}^M {P_{i, m}},\\
   Q &= y \prod_{m=1}^M \rf{A_m}{a'_m}\rf{(V_m+rv_m-v'_m)}{v'_m}, \\
   R &= \prod_{m=1}^M \rf{(U_m+ru_m-u'_m)}{u'_m}\rf{B_m}{b'_m}.
\end{alignat*}
Depending on the actual values of the coefficients appearing in~$h$, this decomposition may or may not
satisfy the requirement $\gcd(Q,S_k^i(R))=1$ for all $i\in\set N$. But even if
it does not, it only means that we may overlook some solutions, but every
solution we find still gives rise to a correct telescoper and
certificate. Since we are interested only in bounding the size of the
telescopers of~$h$, it is sufficient to study under which circumstances the
Gosper equation
\[
  P = Q\,S_k(Y) - R\,Y
\]
with the above choice of $P,Q,R$ has a solution.

\subsection{Counting Variables and Equations}

Apagodu and Zeilberger~\cite{mohammed05} proceed from here by analyzing the linear system over
$\K(n)$ resulting from the Gosper equation for a suitable choice of the degree
of~$Y$. They derive a bound on~$r$ but give no information on the
degree~$d$. General bounds for the degrees of solutions of linear systems with
polynomial coefficients could be applied, but they turn out to overshoot quite
much. In particular, it seems difficult to capture the phenomenon that increasing $r$
may allow for decreasing $d$ using such general bounds.

We proceed differently. Instead of a coefficient comparison with respect to powers
of $k$ leading to a linear system over~$\K(n)$, we consider a coefficient comparison
with respect to powers of $n$ and $k$ leading to a linear system over~$\K$. This
requires us to make a choice not only for the degree of $Y$ in~$k$ but also for the
degree of $Y$ in $n$ as well as for the degrees of the $\ell_i$ ($i=0,\dots,r$) in~$n$.
For expressing the number of variables and equations in this system, it is helpful
to adopt the following definition.

\begin{defi}\label{def:greek}
  For a proper hypergeometric term $h$ as in~\eqref{eq:hgdef}, let {\allowdisplaybreaks
  \begin{alignat*}3
       \delta &=\deg p,\\
    \vartheta &=\max\Bigl\{\sum_{m=1}^M (a_m+b_m),\sum_{m=1}^M (u_m+v_m)\Bigr\},\\[-2pt]
      \lambda &=\sum_{m=1}^M(u_m+v_m),\\
          \mu &=\sum_{m=1}^M (a_m+b_m-u_m-v_m),\\[-2pt]
          \nu &=\max\Bigl\{\sum_{m=1}^M (a'_m+v'_m),\sum_{m=1}^M (u'_m+b'_m)\Bigr\}.\\
  \end{alignat*}}
\end{defi}

Note that these parameters are integers which only depend on $h$ but not on $r$ or~$d$.
Except for $\mu$, they are all nonnegative. Note also that we have $\lambda+\mu\geq0$
and $\vartheta=\lambda+\mu^+\geq|\mu|$.

\begin{lemma}\label{lem:1}
  Let $d_i:=\deg_n\ell_i$ ($i=0,\dots,r$). Then
  \begin{alignat*}1
    &\deg P \leq \delta + \lambda r + \max_{i=0}^r (d_i + i \mu).
  \end{alignat*}
  Furthermore, $\deg_k P\leq \delta + \vartheta r$.
\end{lemma}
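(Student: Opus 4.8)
The plan is to bound the degrees of $P$ directly from its explicit formula
\[
  P = \sum_{i=0}^r \ell_i\, x^i\, S_n^i(p) \prod_{m=1}^M P_{i,m},
\]
by estimating the $n$-degree (respectively $k$-degree) of each summand and then taking the maximum over $i$. Since $\deg_n P \le \max_{i=0}^r \deg_n(\text{$i$th summand})$ and likewise for total degree, it suffices to control one rising-factorial product $P_{i,m}$ at a time. First I would record the elementary observation that for a linear form $p$ in $n,k$, the rising factorial $\rf{p}{m}$ is a product of $m$ factors each of degree~$1$, so $\deg \rf{p}{m}=m$ whenever the leading coefficient does not vanish; more usefully, $\deg_n \rf{p}{m}$ equals the number of factors in which $n$ appears with nonzero coefficient. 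Applied to the four factors of
\[
  P_{i,m}=\rf{A_m}{ia_m}\rf{B_m}{ib_m}\rf{(U_m+iu_m)}{(r-i)u_m}\rf{(V_m+iv_m)}{(r-i)v_m},
\]
this gives $\deg_n P_{i,m}= i a_m + i b_m + (r-i)u_m + (r-i)v_m$ (the shifts by $iu_m$, $iv_m$ only change the constant term, not the $n$-degree), since $A_m,B_m,U_m,V_m$ all have $n$-coefficients $a_m,b_m,u_m,v_m$ respectively.

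For the total-degree bound I would sum over $m$ and add the contributions from $\ell_i$ and $S_n^i(p)$. The factor $\ell_i\in\K[n]$ contributes $d_i$, and $S_n^i(p)$ contributes $\deg p=\delta$ (shifting $n\mapsto n+i$ leaves the total degree unchanged). Summing the per-$m$ estimate gives
\[
  \sum_{m=1}^M\bigl(i(a_m+b_m)+(r-i)(u_m+v_m)\bigr)
   = r\sum_{m=1}^M(u_m+v_m) + i\sum_{m=1}^M(a_m+b_m-u_m-v_m),
\]
which is exactly $\lambda r + i\mu$ in the notation of Definition~\ref{def:greek}. Hence the total degree of the $i$th summand is at most $\delta + \lambda r + d_i + i\mu$, and taking the maximum over $i=0,\dots,r$ yields the first claimed bound $\deg P \le \delta + \lambda r + \max_{i=0}^r(d_i+i\mu)$.

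For the $k$-degree bound the reasoning is parallel but the roles of the two inner bounds in $\vartheta$ become relevant. Here $\ell_i$ is free of~$k$ and $S_n^i(p)$ has $k$-degree at most $\delta$, while each factor in $P_{i,m}$ contributes its number of factors to $\deg_k$ exactly as before, since $A_m,B_m,U_m,V_m$ also depend linearly on~$k$ (with coefficients $a_m',b_m',u_m',v_m'$, but the \emph{number} of factors is what matters, and that is still $ia_m,ib_m,(r-i)u_m,(r-i)v_m$). Summing over $m$ gives the same expression $\lambda r + i\mu$ as above; bounding it crudely by replacing both $i$ and $r-i$ by~$r$ yields $r\sum_m(a_m+b_m)$ or $r\sum_m(u_m+v_m)$ according to the sign of $\mu$, and in either case this is at most $\vartheta r$ since $\vartheta=\max\{\sum(a_m+b_m),\sum(u_m+v_m)\}$. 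Adding the $\delta$ from $S_n^i(p)$ gives $\deg_k P\le \delta+\vartheta r$.

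The step I expect to require the most care is verifying that the $n$-degree (and $k$-degree) of each rising factorial equals the \emph{number of factors} rather than being reduced by cancellation: one must confirm that the leading coefficients in $n$ (resp.\ $k$) do not drop, i.e.\ that shifting the argument of $\rf{\,\cdot\,}{m}$ by an integer constant never lowers the degree. This is where the nonnegativity of $a_m,b_m,u_m,v_m$ and the linearity of $A_m,B_m,U_m,V_m$ are used, and it is what licenses treating the bound as an \emph{equality} per factor before the $\max$ and the sign-of-$\mu$ estimate convert it into the stated inequalities.
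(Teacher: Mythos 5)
Your proof is correct and follows essentially the same route as the paper: bound each $P_{i,m}$ by its number of linear factors, $\deg P_{i,m}\leq ia_m+ib_m+(r-i)u_m+(r-i)v_m$, sum over $m$ to recognize $\lambda r+i\mu$, add the contributions $d_i$ and $\delta$ from $\ell_i$ and $S_n^i(p)$, and for the $k$-degree use $\deg_k\ell_i=0$ together with $iA+(r-i)B\leq r\max\{A,B\}=\vartheta r$. (Your closing worry about degree drop is unnecessary, since only upper bounds per factor are needed, but it does no harm.)
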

\begin{proof}
  It suffices to observe that
  \[
    \deg P_{i,m} \leq ia_m + ib_m + (r-i)u_m + (r-i)v_m
  \]
  for all $m=1,\dots,M$ and all $i=0,\dots,r$. For the degree with respect to~$k$,
  observe also that $\deg_k\ell_i=0$ for all~$i$.
\end{proof}

We have some freedom in choosing the~$d_i$. The choice influences the number of variables
in the ansatz
\[
  L=\sum_{i=0}^r \sum_{j=0}^{d_i} \ell_{i,j} n^j S_n^i
\]
as well as the number of equations. We prefer to have many variables and few equations.
For a fixed target degree~$d$, the maximum possible number of variables is $(d+1)(r+1)$
by choosing $d_0=d_1=\cdots=d_r=d$. But this choice also leads to many equations.
A better balance between number of variables and number of equations is obtained by lowering
some of the $d_i$ with indices close to zero (if $\mu$ is negative) or with indices
close to~$r$ (if $\mu$ is positive). Specifically, we choose
\[
  d_i := d - \left\{\begin{array}{ll}
      (\nu+i-r)^+|\mu|&\quad\text{if $\mu\geq 0$}\\
      (\nu-i)^+|\mu|&\quad\text{if $\mu< 0$}.
      \end{array}\right.
\]
See~\cite[Ex.~11, Ex.~15.5 and the remarks after Thm.~14]{chen11} for a detailed motivation of
the corresponding choice in the differential case. The support of the ansatz for $L$ looks
as in the following diagram, where every term $n^j S_n^i$ is represented by a bullet at
position $(i,j)$:
\begin{center}
  \begin{picture}(80,102)(0,-10)
    \multiput(0,0)(0,10){9}{\circle*{3}}
    \multiput(10,0)(0,10){9}{\circle*{3}}
    \multiput(20,0)(0,10){9}{\circle*{3}}
    \multiput(30,0)(0,10){9}{\circle*{3}}
    \multiput(40,0)(0,10){9}{\circle*{3}}
    \multiput(50,0)(0,10){9}{\circle*{3}}
    \multiput(60,0)(0,10){7}{\circle*{3}}
    \multiput(70,0)(0,10){5}{\circle*{3}}
    \multiput(80,0)(0,10){3}{\circle*{3}}
    \put(0,-5){\line(0,-1){3}}\put(0,-15){\hbox to0pt{\hss $\scriptstyle i=0$\hss}}
    \put(80,-5){\line(0,-1){3}}\put(80,-15){\hbox to0pt{\hss $\scriptstyle i=r$\hss}}
    \put(-5,0){\line(-1,0){3}}\put(-12,-2){\hbox to0pt{\hss $\scriptstyle j=0$}}
    \put(-5,80){\line(-1,0){3}}\put(-12,78){\hbox to0pt{\hss $\scriptstyle j=d$}}
    \put(83,73){$\Big\}\ \scriptstyle \mu$}
    \put(83,51){$\quad \vdots$}
    \put(83,33){$\Big\}\ \scriptstyle \mu$}
    \put(56,85){$\scriptstyle\overbrace{\rule{25pt}{0pt}}^{\nu}$}
  \end{picture}
  \medskip
\end{center}
With this choice for the degrees~$d_i$, the number of variables in the ansatz for $L$ is
\[
 \sum_{i=0}^r (d_i+1) = (d+1)(r+1) - \tfrac12|\mu|\nu(\nu+1),
\]
provided that $d\geq|\mu|\nu$.
The number of resulting equations is as follows.

\begin{lemma}\label{lem:2}
  If the $d_i$ are chosen as above, then $P$ contains at most
  \[
    \tfrac12\bigl(\delta+\vartheta r+1\bigr)\bigl(\delta + 2d + \vartheta r - 2|\mu|\nu + 2\bigr)
  \]
  terms $n^i k^j$.
\end{lemma}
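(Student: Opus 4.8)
The plan is to bound the number of monomials $n^i k^j$ occurring in $P$ by the number of lattice points in a trapezoidal region of the $(\deg_n,\deg_k)$-plane that contains the support of $P$. This region will be cut out by exactly two constraints, the total-degree bound and the $k$-degree bound, both of which are already supplied by Lemma~\ref{lem:1}. In particular no separate bound on $\deg_n P$ is needed, since the total-degree inequality by itself controls the horizontal extent of the support.

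First I would invoke Lemma~\ref{lem:1} directly: every monomial $n^a k^b$ of $P$ satisfies $b\le\deg_k P\le\delta+\vartheta r$ and $a+b\le\deg P\le\delta+\lambda r+\max_{i=0}^r(d_i+i\mu)$. The crux is to evaluate $\max_i(d_i+i\mu)$ for the specific $d_i$ chosen above. A short case distinction on the sign of $\mu$ does this: for $\mu\ge0$ one checks that $d_i+i\mu$ equals the constant $d+(r-\nu)\mu$ wherever the correction term $(\nu+i-r)^+|\mu|$ is active and is no larger otherwise, and for $\mu<0$ that it equals the constant $d-\nu|\mu|$ wherever $(\nu-i)^+|\mu|$ is active and is no larger otherwise. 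In either case the maximum is $d+r\mu^+-\nu|\mu|$, so with $\vartheta=\lambda+\mu^+$ the total-degree estimate collapses to $\deg P\le\delta+d+\vartheta r-\nu|\mu|$.

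Writing $D:=\delta+d+\vartheta r-\nu|\mu|$ and $D_k:=\delta+\vartheta r$, every monomial of $P$ lies in $\{(a,b):a,b\ge0,\ a+b\le D,\ b\le D_k\}$. Because the standing hypothesis $d\ge|\mu|\nu$ gives $D\ge D_k$, this region is a genuine trapezoid, and summing its row lengths over $b=0,\dots,D_k$ yields
\[
  \sum_{b=0}^{D_k}(D-b+1)=\tfrac12(D_k+1)(2D+2-D_k).
\]
Substituting the values of $D$ and $D_k$ and simplifying $2D+2-D_k=\delta+2d+\vartheta r-2|\mu|\nu+2$ produces exactly the asserted bound.

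I expect the only real obstacle to be the maximization of $d_i+i\mu$; this is precisely where the seemingly ad hoc choice of the $d_i$ pays off, as it is engineered so that $d_i+i\mu$ stays flat along the sloped portion of the staircase support, turning what would otherwise be a piecewise-linear upper boundary into the single clean line $a+b=D$. The remaining steps — reading off the two degree constraints from Lemma~\ref{lem:1} and counting lattice points in the trapezoid by the standard truncated-triangle sum — are routine. I would also note at the outset that the estimate is an upper bound only, since cancellations and the slack in the degree inequalities of Lemma~\ref{lem:1} may make the actual support strictly smaller.
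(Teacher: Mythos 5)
Your proposal is correct and follows essentially the same route as the paper: it reads off the two degree constraints from Lemma~\ref{lem:1}, evaluates $\max_i(d_i+i\mu)$ by the same case distinction on the sign of $\mu$ to get $\deg P\le\delta+\vartheta r+d-\nu|\mu|$, and counts the lattice points of the resulting trapezoid by the identical row-sum formula. The extra observations (that $d_i+i\mu$ is constant along the sloped part of the staircase, and that $d\ge|\mu|\nu$ is needed for the region to be a genuine trapezoid) are accurate refinements of what the paper states as plain inequalities.
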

\begin{proof}
  If $\mu\geq 0$, we have
  \[
    d_i + i\mu = d - (\nu+i-r)^+\mu + i\mu \leq d - \nu|\mu| + r\mu
  \]
  for all $i=0,\dots,r$. Likewise, when $\mu<0$, we have
  \[
    d_i + i\mu = d - (\nu-i)^+|\mu| + i\mu \leq d - \nu|\mu|
  \]
  for all $i=0,\dots,r$. Together with Lemma~\ref{lem:1}, it follows that
  \[
    \deg P\leq \delta + (\lambda+\mu^+)r + d - \nu|\mu| = \delta + \vartheta r + d - \nu|\mu|
  \]
  regardless of the sign of~$\mu$. We also have $\deg_k P\leq\delta +\vartheta r$
  from Lemma~\ref{lem:1}. For the number of terms $n^i k^j$ in $P$ we have
  \[
   \sum_{i=0}^{\deg_k P}\! (1+\deg P - i) = \tfrac12(\deg_kP +1)(2\deg P + 2 - \deg_kP).
  \]
  Plugging the estimates for $\deg P$ and $\deg_k P$ into the right hand side gives
  the expression claimed in the Lemma.
\end{proof}

The support of $P$ has a trapezoidal shape which is determined by the total degree and 
the degree with respect to~$k$:

\begin{center}
  \begin{picture}(30,80)(0,-10)
    \multiput(0,0)(0,10){8}{\circle*{3}}
    \multiput(10,0)(0,10){7}{\circle*{3}}
    \multiput(20,0)(0,10){6}{\circle*{3}}
    \multiput(30,0)(0,10){5}{\circle*{3}}

    \put(-4,-5){\hbox{$\scriptstyle\underbrace{\rule{38pt}{0pt}}_{\deg_k P}$}}
    \put(-5,35){\hbox to 0pt{\hss$\scriptstyle\deg P\ \left\{\rule{0pt}{41pt}\right.$}}
  \end{picture}
  \medskip
\end{center}

The next step is to choose the degrees for $Y$ in $n$ and~$k$. This is done in such a way that
$Q\,S_k(Y)-R\,Y$ only contains terms which are already expected to occur in~$P$, so that no
additional equations will appear.

\begin{lemma}\label{lem:3}
  Let the $d_i$ be chosen as before and suppose that $Y\in\K[n,k]$ is such that
  $\deg Y\leq \deg P - \nu$ and $\deg_k Y\leq \deg_k P - \nu$. Then
  $P-(Q\,S_k(Y)-R\,Y)$ contains at most
  \[
    \tfrac12\bigl(\delta+\vartheta r+1\bigr)\bigl(\delta + 2d + \vartheta r - 2|\mu|\nu + 2\bigr)
  \]
  terms $n^i k^j$.
\end{lemma}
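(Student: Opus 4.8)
The plan is to show that subtracting $Q\,S_k(Y)-R\,Y$ from $P$ cannot push the support outside the very trapezoid that was used to count the monomials of $P$ in Lemma~\ref{lem:2}. Concretely, I would prove that both the total degree and the $k$-degree of $Q\,S_k(Y)-R\,Y$ are bounded by those of $P$, so that the difference still lies in the region $\{\,n^i k^j : i+j\le \deg P,\ j\le\deg_k P\,\}$. Since this region contains exactly the number of monomials stated in Lemma~\ref{lem:2}, the bound then follows verbatim, with no new terms appearing.

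First I would bound the degrees of $Q$ and $R$. Each of the factors $\rf{A_m}{a'_m}$, $\rf{(V_m+rv_m-v'_m)}{v'_m}$, $\rf{(U_m+ru_m-u'_m)}{u'_m}$, and $\rf{B_m}{b'_m}$ is a product of linear polynomials in $n$ and $k$, so that $\deg\rf{A_m}{a'_m}\le a'_m$ and similarly for the others. Summing over $m$ gives $\deg Q\le\sum_m(a'_m+v'_m)$ and $\deg R\le\sum_m(u'_m+b'_m)$, both of which are at most $\nu$ by Definition~\ref{def:greek}. In particular $\deg_k Q\le\deg Q\le\nu$ and $\deg_k R\le\deg R\le\nu$, since the $k$-degree never exceeds the total degree.

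Next, since $S_k$ merely shifts $k\mapsto k+1$, it preserves degrees, so $\deg S_k(Y)=\deg Y$ and $\deg_k S_k(Y)=\deg_k Y$. Combining this with the hypotheses $\deg Y\le\deg P-\nu$ and $\deg_k Y\le\deg_k P-\nu$ yields
\[
  \deg(Q\,S_k(Y))\le\deg Q+\deg Y\le\nu+(\deg P-\nu)=\deg P,
\]
and in exactly the same way $\deg(R\,Y)\le\deg P$, $\deg_k(Q\,S_k(Y))\le\deg_k P$, and $\deg_k(R\,Y)\le\deg_k P$. Hence $Q\,S_k(Y)-R\,Y$, and therefore also $P-(Q\,S_k(Y)-R\,Y)$, has total degree at most $\deg P$ and $k$-degree at most $\deg_k P$.

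Finally, I would feed in the degree estimates $\deg P\le\delta+\vartheta r+d-\nu|\mu|$ and $\deg_k P\le\delta+\vartheta r$ already obtained in the proof of Lemma~\ref{lem:2}. Every monomial $n^i k^j$ of $P-(Q\,S_k(Y)-R\,Y)$ then satisfies $j\le\delta+\vartheta r$ and $i+j\le\delta+\vartheta r+d-\nu|\mu|$, and the number of such monomials is precisely the trapezoid count $\tfrac12(\delta+\vartheta r+1)(\delta+2d+\vartheta r-2|\mu|\nu+2)$ computed there. I do not anticipate any genuine obstacle; the only point demanding care is the degree bookkeeping for $Q$ and $R$ through the parameter~$\nu$, and this is exactly the reason the slack~$\nu$ was built into the hypotheses on~$Y$.
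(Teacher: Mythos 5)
Your proposal is correct and follows essentially the same route as the paper, whose proof is just the one-line remark that the argument of Lemma~\ref{lem:2} applies once one notes $\max\{\deg Q,\deg R\}=\max\{\deg_k Q,\deg_k R\}=\nu$; you have simply written out the degree bookkeeping (degrees of $Q$ and $R$ bounded by $\nu$, $S_k$ preserving both degrees, and the resulting containment of the support in the same trapezoid) in full detail.
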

\begin{proof}
  As for Lemma~\ref{lem:2}, using also $\max\{\deg Q,\deg R\}=\max\{\deg_k Q,\deg_k R\}=\nu$.
\end{proof}

Lemma~\ref{lem:3} suggests the ansatz
\[
  Y = \sum_{i=0}^{s_1}\sum_{j=0}^{s_2-i} y_{i,j}k^i n^j
\]
with $s_1=\deg_k P-\nu$ and $s_2=\deg P-\nu$, which provides us with
\begin{alignat*}1
  &\tfrac12\bigl(\delta+\vartheta r+1-\nu\bigr)\bigl(\delta + 2d + \vartheta r - 2|\mu|\nu + 2-\nu\bigr)
\end{alignat*}
variables. We are now ready to formulate the main result of this section.
Note that the inequality for~$d$ is a considerably simpler formula
than the corresponding result in the differential case (Thm.~14 in~\cite{chen11}).

\begin{theorem}\label{thm:trans}
  Let $h$ be a proper hypergeometric term which cannot be written $h=qh_0$ for
  some $q\in\K(n,k)$ and a hypergeometric term $h_0$ with $S_k(h_0)/h_0=1$.
  Let $\delta,\lambda,\mu,\nu$ be as in Definition~\ref{def:greek}, let $r\geq\nu$ and
  \[
    d>\frac{\bigl(\vartheta\nu-1\bigr)r + \tfrac12\nu\bigl(2\delta+|\mu|+3-(1+|\mu|)\nu\bigr) - 1}{r-\nu+1}.
  \]
  Then there exists a telescoper $L$ for $h$ of order~$r$ and degree~$d$.
\end{theorem}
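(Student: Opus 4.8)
The plan is to count variables against equations in the linear system coming from the Gosper equation, and to invoke the principle that an underdetermined homogeneous linear system over a field always has a nonzero solution. All the ingredients have been assembled in Lemmas~\ref{lem:1}--\ref{lem:3}, so the argument amounts to comparing two explicit quadratic expressions in $r$ and $d$.

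First I would set up the linear system. We make the ansatz $L=\sum_{i=0}^r\sum_{j=0}^{d_i}\ell_{i,j}n^jS_n^i$ with the $d_i$ chosen as before the statement of Lemma~\ref{lem:2}, and the ansatz $Y=\sum_{i=0}^{s_1}\sum_{j=0}^{s_2-i}y_{i,j}k^in^j$ with $s_1=\deg_kP-\nu$ and $s_2=\deg P-\nu$, as suggested by Lemma~\ref{lem:3}. Substituting both ans\"atze into the Gosper equation $P=Q\,S_k(Y)-R\,Y$ and comparing coefficients of each monomial $n^ik^j$ yields a homogeneous linear system over~$\K$ in the unknowns $\ell_{i,j}$ and $y_{i,j}$. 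The number of unknowns is the sum of the variable counts already computed: the ansatz for~$L$ contributes $(d+1)(r+1)-\tfrac12|\mu|\nu(\nu+1)$ variables (valid once $d\geq|\mu|\nu$), and the ansatz for~$Y$ contributes $\tfrac12(\delta+\vartheta r+1-\nu)(\delta+2d+\vartheta r-2|\mu|\nu+2-\nu)$ variables.

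Next I would bound the number of equations. By Lemma~\ref{lem:3}, the polynomial $P-(Q\,S_k(Y)-R\,Y)$ contains at most $\tfrac12(\delta+\vartheta r+1)(\delta+2d+\vartheta r-2|\mu|\nu+2)$ monomials $n^ik^j$, and each such monomial contributes one equation. The decisive step is then purely arithmetical: the system has a nonzero solution as soon as the number of variables strictly exceeds the number of equations, i.e. whenever
\[
  (d+1)(r+1)-\tfrac12|\mu|\nu(\nu+1)+\tfrac12(\delta+\vartheta r+1-\nu)(\delta+2d+\vartheta r-2|\mu|\nu+2-\nu)
\]
exceeds $\tfrac12(\delta+\vartheta r+1)(\delta+2d+\vartheta r-2|\mu|\nu+2)$. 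Any such nonzero solution furnishes a $Y$ solving the Gosper equation, hence a certificate $C=RY/P$, and by the ansatz the resulting operator $L$ has order at most~$r$ and degree at most~$d$; the hypothesis on~$h$ together with the discussion at the start of Section~\ref{sec:trans} guarantees $L\neq0$, so it is a genuine telescoper.

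The main obstacle I anticipate is entirely computational rather than conceptual: simplifying the variables-minus-equations inequality into the clean hyperbola $d>\bigl((\vartheta\nu-1)r+\tfrac12\nu(2\delta+|\mu|+3-(1+|\mu|)\nu)-1\bigr)/(r-\nu+1)$ stated in the theorem. The two quadratic expressions in~$d$ share their leading $d^2$-behaviour, so upon subtraction the $d^2$-terms must cancel, leaving a relation that is linear in~$d$; solving that linear relation for~$d$ produces a rational function of~$r$ whose denominator is $r-\nu+1$, which is where the condition $r\geq\nu$ enters to keep the denominator positive. I would carry out this subtraction carefully, track the cancellations, and verify that the side condition $d\geq|\mu|\nu$ needed for the variable count is implied by the hyperbola bound in the relevant range, thereby completing the proof.
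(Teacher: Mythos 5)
Your proposal matches the paper's proof essentially step for step: the same ans\"atze for $L$ and $Y$, the same variable counts and the equation count from Lemmas~\ref{lem:2} and~\ref{lem:3}, the same appeal to the underdetermined-system principle and to the non-splitting hypothesis to rule out $L=0$, and the same variables-versus-equations inequality solved for $d$ under $r\geq\nu$. (One cosmetic remark: both sides of that inequality are already linear in $d$ --- there are no $d^2$ terms to cancel --- and the net coefficient of $d$ is $(r+1)-\nu$, which is exactly where the denominator $r-\nu+1$ of the stated hyperbola comes from.)
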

\begin{proof}
  A sufficient condition for the existence of a telescoper of order $r$ and degree~$d$ is that
  for some particular ansatz, the equation
  \[
    P=Q\,S_k(Y)-R\,Y
  \]
  has a nontrivial solution. A sufficient condition for the existence of a solution is that the
  linear system resulting from coefficient comparison has more variables than equations.
  For all $d$ in question, we have $d>\vartheta\nu\geq|\mu|\nu$. Therefore,
  with the ansatz described above, we have
  \[
    (d+1)(r+1)-\tfrac12|\mu|\nu(\nu+1)
  \]
  variables $\ell_{i,j}$ in~$P$,
  \begin{alignat*}1
    \tfrac12\bigl(\delta+\vartheta r+1-\nu\bigr)\bigl(\delta + 2d + \vartheta r - 2|\mu|\nu + 2-\nu\bigr)
  \end{alignat*}
  variables $y_{i,j}$ in~$Y$, and
  \[
    \tfrac12\bigl(\delta+\vartheta r+1\bigr)\bigl(\delta + 2d + \vartheta r - 2|\mu|\nu + 2\bigr)
  \]
  equations. Solving the inequality
  \begin{alignat*}1
    &(d+1)(r+1)-\tfrac12|\mu|\nu(\nu+1)\\
    &\quad{}+\tfrac12\bigl(\delta+\vartheta r+1-\nu\bigr)\bigl(\delta + 2d + \vartheta r - 2|\mu|\nu + 2-\nu\bigr)\\
    &>\tfrac12\bigl(\delta+\vartheta r+1\bigr)\bigl(\delta + 2d + \vartheta r - 2|\mu|\nu + 2\bigr)
  \end{alignat*}
  under the assumption $r\geq\nu$ for $d$ gives the claimed degree estimate.
\end{proof}

\subsection{Examples and Consequences}

\begin{example}\label{ex:trans}
\begin{enumerate}
\item For
  \[
    h=(n^2+k^2+1)\frac{\Gamma(2n+3k)}{\Gamma(2n-k)}
  \]
  we have $\delta=2$, $\vartheta=2$, $\mu=0$, $\nu=4$. Theorem~\ref{thm:trans} predicts
  a telescoper of order~$r$ and degree~$d$ whenever $r\geq4$ and
  \[
    d>\frac{7r+5}{r-3}.
  \]
  The left figure below shows the curve defined by the right hand side (black) together with
  the region of all points $(r,d)$ for which we found telescopers of $h$ with order~$r$
  and degree~$d$ by direct calculation (gray).
  In this example, the estimate overshoots by very little only.
\item The corresponding picture for
  \[
    h=\frac{\Gamma(2n+k)\Gamma(n-k+2)}{\Gamma(2n-k)\Gamma(n+2k)}
  \]
  is shown below on the right. Here, $\delta=0$, $\vartheta=3$, $\mu=0$, $\nu=3$ and
  Theorem~\ref{thm:trans} predicts a telescoper of order~$r$ and degree~$d$ whenever $r\geq3$
  and
  \[
    d>\frac{8r-1}{r-2}.
  \]
  In this example, the estimate is less tight.



\end{enumerate}
\end{example}

\centerline{\includegraphics{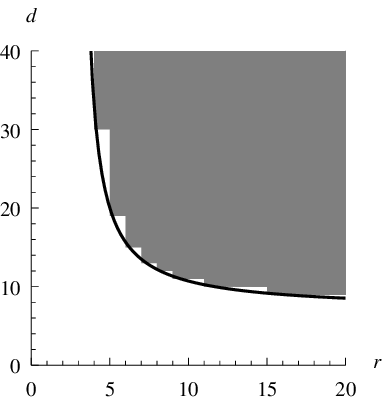}\hfil\includegraphics{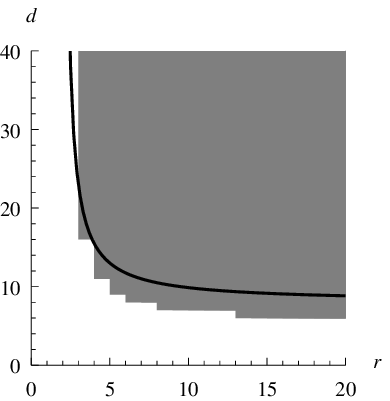}}

\medskip

The points $(r,d)$ in the portion of the gray region which is below the black curve
represent telescopers where the corresponding linear system resulting from the ansatz considered
in our proof is overdetermined but, for some strange reason, nevertheless nontrivially solvable. 
The small portions of white space which lie above the curves are not in contradiction with our 
theorem because they do not contain any points with integer coordinates. (The theorem says that
every point $(r,d)\in\set Z^2$ above the curve belongs to the gray region.)

Theorem~\ref{thm:trans} supplements the bound given in \cite{mohammed05} on the order
of telescopers for a hypergeometric term by an estimate for the degree that these
operators may have. In addition, it provides lower degree bounds for higher orders
and admits a bound on the least possible degree for a telescoper.

\begin{corollary}
  With the notation of Theorem~\ref{thm:trans}, $h$ admits a telescoper of
  order $r=\nu$ and degree
  \[
    d=\bigl\lceil\tfrac12\nu(2\delta+2\nu\vartheta+|\mu|-\nu|\mu|)\bigr\rceil
  \]
  as well as a telescoper of order
  \[
    r = \bigl\lceil\tfrac12\nu(1+2\delta+2(\nu-1)(\vartheta-|\mu|))\bigr\rceil
  \]
  and degree $d=\vartheta\nu$.
\end{corollary}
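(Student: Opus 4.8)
The plan is to obtain both claimed telescopers by specializing the degree bound of Theorem~\ref{thm:trans} to the two extreme regimes on the order-degree hyperbola, namely the smallest admissible order $r=\nu$ and the smallest admissible degree $d=\vartheta\nu$. Since Theorem~\ref{thm:trans} asserts the existence of a telescoper of order~$r$ and degree~$d$ whenever $r\geq\nu$ and $(r,d)$ lies strictly above the stated hyperbola, it suffices in each case to check that the proposed pair satisfies these two hypotheses; the existence of the telescoper then follows immediately.

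For the first telescoper, I would set $r=\nu$. The hypothesis $r\geq\nu$ is then met with equality. The delicate point is that the denominator $r-\nu+1$ in the bound of Theorem~\ref{thm:trans} becomes exactly~$1$ at $r=\nu$, so the right-hand side collapses to the numerator evaluated at $r=\nu$, namely
\[
  (\vartheta\nu-1)\nu + \tfrac12\nu\bigl(2\delta+|\mu|+3-(1+|\mu|)\nu\bigr) - 1.
\]
I would expand and simplify this expression, collecting the $\nu^2$ and $\nu$ terms, and verify that it equals $\tfrac12\nu(2\delta+2\nu\vartheta+|\mu|-\nu|\mu|)-1$. The theorem requires $d$ strictly greater than this quantity, so the smallest admissible integer value of $d$ is the ceiling of $\tfrac12\nu(2\delta+2\nu\vartheta+|\mu|-\nu|\mu|)$, exactly as claimed.

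For the second telescoper, I would instead fix the degree at $d=\vartheta\nu$ and solve for the smallest order~$r$ that still places $(r,\vartheta\nu)$ above the hyperbola. Substituting $d=\vartheta\nu$ into the inequality of Theorem~\ref{thm:trans} and clearing the denominator $r-\nu+1$ (positive under $r\geq\nu$) turns the condition into a linear inequality in~$r$, which I would rearrange to isolate~$r$. The bookkeeping here is the main obstacle: one must carefully expand the products, track the terms in $\nu$, $\nu^2$, $\delta$, $|\mu|$, and $\vartheta$, and confirm that the threshold simplifies to $\tfrac12\nu(1+2\delta+2(\nu-1)(\vartheta-|\mu|))$, whose ceiling gives the stated~$r$. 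I would also check the consistency condition $r\geq\nu$ for this value, which holds because $\vartheta\geq|\mu|$ (noted after Definition~\ref{def:greek}) forces the bracketed factor to be at least~$1$.

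The only genuine subtlety beyond algebra is the handling of the strict inequality and the ceiling: because Theorem~\ref{thm:trans} guarantees a telescoper for every integer point \emph{strictly} above the curve, the minimal integer degree (resp.\ order) is the ceiling of the real threshold, and when the threshold happens to be an integer one must confirm that the ceiling still yields an admissible point. I expect both simplifications to go through cleanly, so that the corollary is essentially a transcription of Theorem~\ref{thm:trans} at its two endpoints, with no new ideas required beyond the specialization and routine simplification.
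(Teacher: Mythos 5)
Your approach is exactly the paper's: the printed proof is the one-line ``immediate by checking that the two choices for $r$ and $d$ satisfy the conditions stated in Theorem~\ref{thm:trans}'', which is precisely the substitute-and-verify plan you describe. One small caveat: at $r=\nu$ the threshold does not simplify to exactly $\tfrac12\nu(2\delta+2\nu\vartheta+|\mu|-\nu|\mu|)-1$ but to that quantity minus an additional $\tfrac12\nu(\nu-1)$; this discrepancy is in the harmless direction (the true threshold is even lower), so your verification still goes through.
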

\begin{proof}
  Immediate by checking that the two choices for $r$ and $d$ satisfy the conditions
  stated in Theorem~\ref{thm:trans}.
\end{proof}

An accurate prediction for the degrees of the telescopers can also be used for improving the efficiency
of creative telescoping algorithms. Although most implementations today compute the telescoper
with minimum order, it may be less costly to compute a telescoper of slightly higher order. If
we know in advance the degrees $d$ of the telescopers for every order~$r$, we can select before
the computation the order $r$ which minimizes the computational cost. Of course, the cost depends
on the algorithm which is used. It is not necessary (and not advisable) to follow the steps in the
derivation of Theorem~\ref{thm:trans} and do a coefficient comparison over~$\K$. Instead, one should
follow the common practice~\cite{koutschan10b} of comparing coefficients only with respect to powers of $k$
and solve a linear system over~$\K(n)$. For nonminimal choices of $r$, this system will have a
nullspace of dimension greater than one, of which we need not compute a complete basis,
but only a single vector with components of low degree. There are algorithms known for computing
such a vector using $\O(m^3t)$ field operations when the system has at most $m$ variables and equations and the
solution has degree at most $t$ \cite{beckermann94,storjohann05,bostan07}. In the situation at hand, we have
$m=(r+1)+(\delta+\vartheta r+1)$ variables and a solution of degree $t=\delta+\vartheta r+d-(|\mu|+1)\nu+1$.
Therefore, in order to compute a telescoper
and its certificate most efficiently, we should minimize the cost function
\[
 C(r,d):=\bigl((\vartheta+1)r+\delta+2\bigr)^3\bigl(\delta+\vartheta r+d-(|\mu|+1)\nu+1\bigr).
\]
According to the following theorem, for asymptotically large input it is significantly better to
choose $r$ slightly larger than the minimal possible value.

\begin{theorem}\label{thm:trans:compl}
  Let $h$ and $\lambda,\mu,\nu$ be as in Theorem~\ref{thm:trans}, $\tau\geq\max\{\vartheta, \nu\}$,
  and suppose that $\kappa\in\set R$ is a constant such that degree $t$ solutions of a linear system
  with $m$ variables and at most $m$ equations over $\K(n)$ can be computed with $\kappa m^3 t$ operations
  in~$\K$. Then:
  \begin{enumerate}
  \item A telescoper of order $r=\tau$ along with a corresponding certificate can be computed using
  \[
    \kappa\tau^9 + \tfrac12(7-|\mu|)\kappa\tau^8 + \O(\tau^7)
  \]
  operations in $\K$.
  \item\label{it:2} If $\alpha>1$ is some constant and $r$ is chosen such that
  $r=\alpha \tau + \O(1)$,
  then a telescoper of order $r$ and a corresponding certificate can be computed using
  \[
    \frac{\alpha^5}{\alpha -1}\kappa\tau^8 + \O(\tau^7)
  \]
  operations in~$\K$.
  \end{enumerate}
  In particular, a telescoper for $h$ and a corresponding certificate can be computed
  in polynomial time.
\end{theorem}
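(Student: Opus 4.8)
The plan is to convert Theorem~\ref{thm:trans} into a statement about the cost model in the hypothesis. For a prescribed order~$r$, the smallest degree $d$ that still admits a telescoper equals, up to rounding, the right-hand side
\[
  D(r):=\frac{(\vartheta\nu-1)r+\tfrac12\nu\bigl(2\delta+|\mu|+3-(1+|\mu|)\nu\bigr)-1}{r-\nu+1}
\]
of the inequality in Theorem~\ref{thm:trans}. I would feed $d=\lceil D(r)\rceil$ into the cost $\kappa\,C(r,d)=\kappa\,m^3t$, where $m=(\vartheta+1)r+\delta+2$ and $t=\delta+\vartheta r+d-(|\mu|+1)\nu+1$, and expand this product asymptotically in~$\tau$. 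Since $C(r,d)$ grows with $\vartheta$ and with $\nu$, and is maximised over the admissible ranges $\vartheta\le\tau$, $\nu\le\tau$ at the corner $\vartheta=\nu=\tau$ (smaller parameters only lower the leading coefficients, so the claimed bounds then hold a fortiori), it suffices to run the expansion after substituting $\vartheta=\nu=\tau$.

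For part~(1) I would set $r=\tau$. As $r=\tau\ge\nu$, Theorem~\ref{thm:trans} applies, and at $\vartheta=\nu=\tau$ the denominator $r-\nu+1$ collapses to~$1$, so that $D(\tau)=\tau^3-\tfrac12(1+|\mu|)\tau^2+\O(\tau)$ and hence $d=\tau^3-\tfrac12(1+|\mu|)\tau^2+\O(\tau)$. Then $m=\tau^2+\tau+\O(1)$ gives $m^3=\tau^6+3\tau^5+\O(\tau^4)$, while $t=\tau^3+\tfrac12(1-|\mu|)\tau^2+\O(\tau)$. Multiplying these two expansions, the coefficient of $\tau^8$ is $3+\tfrac12(1-|\mu|)=\tfrac12(7-|\mu|)$, so $C(\tau,d)=\tau^9+\tfrac12(7-|\mu|)\tau^8+\O(\tau^7)$; multiplying by $\kappa$ yields the first estimate.

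For part~(2) the decisive point is that choosing $r=\alpha\tau+\O(1)$ with $\alpha>1$ keeps the denominator $r-\nu+1=(\alpha-1)\tau+\O(1)$ of order~$\tau$ instead of letting it collapse. The numerator of $D(r)$ is $\alpha\tau^3+\O(\tau^2)$, so the degree drops by a whole order of magnitude to $d=\frac{\alpha}{\alpha-1}\tau^2+\O(\tau)$. Now $m=\alpha\tau^2+\O(\tau)$ gives $m^3=\alpha^3\tau^6+\O(\tau^5)$, and $t=\left(\alpha+\frac{\alpha}{\alpha-1}\right)\tau^2+\O(\tau)=\frac{\alpha^2}{\alpha-1}\tau^2+\O(\tau)$, whence $\kappa\,C(r,d)=\frac{\alpha^5}{\alpha-1}\kappa\tau^8+\O(\tau^7)$, the second estimate. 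Both bounds are polynomials in~$\tau$; since $\tau$ may be taken polynomial in the bit-size of the data defining~$h$, and forming $P,Q,R$ and assembling the linear system cost no more, a telescoper and its certificate can be computed in polynomial time.

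The step I expect to be the main obstacle is the interplay between the order~$r$ and the denominator $r-\nu+1$ of the degree bound. It is precisely the degeneration of this denominator to~$\O(1)$ at the minimal admissible order that inflates $d$ to $\Theta(\tau^3)$ and produces the $\tau^9$ cost in part~(1); a constant-factor increase of $r$ restores a denominator of size $\Theta(\tau)$, cuts $d$ down to $\Theta(\tau^2)$, and saves an entire factor of~$\tau$. The remaining delicacy is purely one of bookkeeping: to read off the second-order coefficient $\tfrac12(7-|\mu|)$ in part~(1) one must expand both $m^3$ and $t$ to two leading terms and confirm that the extremal case $\vartheta=\nu=\tau$ really governs the worst case, rather than stopping at leading order.
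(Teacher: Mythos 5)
Your proposal is correct and takes essentially the same route as the paper: substitute the degree bound of Theorem~\ref{thm:trans} with $\vartheta$ and $\nu$ replaced by their upper bound $\tau$ into the cost function $C(r,d)=\kappa m^3 t$, then expand asymptotically at $r=\tau$ and at $r=\alpha\tau+\O(1)$. The paper's own proof states only that ``the claim follows from the asymptotic expansions of $C(\tau,f(\tau)+1)$ and $C(\alpha\tau,f(\alpha\tau)+1)$''; your computation carries out exactly those expansions and recovers the coefficients $\tfrac12(7-|\mu|)$ and $\tfrac{\alpha^5}{\alpha-1}$ correctly.
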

\begin{proof}
  According to Theorem~\ref{thm:trans}, for every $r\geq\tau$ there exists a telescoper of
  order $r$ and degree $d$ for any
  \[
    d > f(r):=\frac{(\tau^2-1)r + \O(\tau^2)}{r-\tau-1}.
  \]
  By assumption, such a telescoper can be computed using no more than
  \[
    C(r,d)=\kappa \bigl((\tau+1)r+\delta+2\bigr)^3\bigl(\delta+\tau r+d-(|\mu|+1)\tau+1\bigr)
  \]
  operations in~$\K$. The claim now follows from the asymptotic expansions of
  $C(\tau,f(\tau)+1)$ and $C(\alpha\tau,f(\alpha\tau)+1)$ for $\tau\to\infty$, respectively.
\end{proof}

The leading coefficient in part~\ref{it:2} is minimized for $\alpha=5/4$. This suggests that
when $\vartheta$ and $\nu$ are large and approximately equal, the order of the cheapest telescoper
is about 20\% larger than the minimal expected order.

\section{The Rational Case}\label{sec:rat}

We now turn to the case where $h$ can be written as $h=qh_0$ for some hypergeometric term $h_0$
with $S_k(h_0)/h_0=1$. By the following transformation, we may assume without loss of generality $h_0=1$.

\begin{lemma}
  Let $h$ be a hypergeometric term and suppose that $h=qh_0$ for some $q\in\K(n,k)$ and a hypergeometric
  term $h_0$ with $S_k(h_0)/h_0=1$. Let $a,b\in\K[n,k]$ be such that $S_n(h_0)/h_0=a/b$.
  Let $L$ be a telescoper for $q$ of order $r$ and degree~$d$.
  Then there exists a telescoper for $h$ of order $r$ and degree at most
  $d+r\max\{\deg_n a, \deg_n b\}$.
\end{lemma}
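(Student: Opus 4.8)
The plan is to transfer a telescoper for $q$ into one for $h=qh_0$ by multiplying through by a suitable power of $b$ to clear the denominators introduced by the shift action of $S_n$ on $h_0$. The key observation is that since $S_k(h_0)/h_0=1$, the term $h_0$ is a ``constant'' as far as $S_k$ is concerned, so any telescoping relation in the $k$-direction for $q$ should survive multiplication by $h_0$, up to fixing the coefficients so they remain polynomials free of $k$.

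First I would start from the assumed relation for $q$: there are polynomials $\ell_0,\dots,\ell_r\in\K[n]$ (not all zero) of degree at most $d$ and a certificate $C\in\K(n,k)$ with
\[
  \sum_{i=0}^r \ell_i S_n^i(q) = S_k(Cq) - Cq.
\]
Next I would compute $S_n^i(h)=S_n^i(q)S_n^i(h_0)$ and use $S_n(h_0)=(a/b)h_0$ to get $S_n^i(h_0)/h_0=\prod_{j=0}^{i-1}S_n^j(a/b)$, a rational function in $\K(n,k)$ whose numerator and denominator have $n$-degree at most $i\max\{\deg_n a,\deg_n b\}$. The idea is to rewrite $\sum_i \ell_i S_n^i(h)$ as $h_0\sum_i \ell_i (S_n^i h_0/h_0) S_n^i(q)$ and to multiply the original identity by an appropriate factor so that each $\ell_i$ gets replaced by $\ell_i$ times a polynomial of controlled degree, turning the rational factors $S_n^i(h_0)/h_0$ into genuine polynomial coefficients while preserving the telescoped right-hand side (now with certificate $Ch_0$, which works because $S_k(h_0)=h_0$).

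The central computation is therefore to clear denominators uniformly. The denominator of $S_n^i(h_0)/h_0$ divides $\prod_{j=0}^{i-1}S_n^j(b)=\rf{?}{?}$-type products; multiplying the telescoper relation by the common denominator $\prod_{j=0}^{r-1}S_n^j(b)$, whose $n$-degree is at most $r\max\{\deg_n a,\deg_n b\}$, converts each coefficient $\ell_i$ into a new polynomial coefficient $\tilde\ell_i\in\K[n]$ of degree at most $d+r\max\{\deg_n a,\deg_n b\}$. Crucially, $\deg_k\tilde\ell_i=0$ still holds because $a,b$ and the shifts act only through $n$ in the relevant factor once $S_k(h_0)/h_0=1$ is used; and multiplying the whole identity by a $k$-free, $S_k$-invariant polynomial commutes with the operator $S_k(\cdot)-(\cdot)$, so the right-hand side stays in telescoped form with an adjusted certificate. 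This yields an operator $\tilde L=\sum_i\tilde\ell_i S_n^i$ of order $r$ and degree at most $d+r\max\{\deg_n a,\deg_n b\}$ that telescopes $h$.

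The main obstacle I anticipate is bookkeeping the degree bound sharply: one must check that the common clearing factor $\prod_{j=0}^{r-1}S_n^j(b)$ really suffices for every $S_n^i(h_0)/h_0$ with $i\le r$ simultaneously, and that the resulting $\tilde\ell_i$ are not accidentally all zero. The nonvanishing is guaranteed because the clearing factor is a nonzero polynomial and the $\ell_i$ are not all zero, but one should verify that multiplying by the full product (rather than the minimal denominator for each $i$) does not overshoot the stated degree $d+r\max\{\deg_n a,\deg_n b\}$ --- here the bound $\deg_n\prod_{j=0}^{r-1}S_n^j(b)\le r\deg_n b\le r\max\{\deg_n a,\deg_n b\}$ closes the gap. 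The subtle point to state carefully is why the right-hand side remains a clean $S_k$-difference, which follows from $S_k(h_0)=h_0$ making $h_0$ factor out of $S_k(Cqh_0)-Cqh_0=h_0(S_k(Cq)-Cq)$.
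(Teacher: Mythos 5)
Your overall plan is the paper's own: replace each $\ell_i$ by $\ell_i$ times a rational function so that the modified operator applied to $h=qh_0$ collapses to $L(q)h_0=(S_k(Cq)-Cq)h_0=S_k(Cqh_0)-Cqh_0$, then clear denominators; this is correct in substance. Two details of your execution need repair, though. First, the factor that must multiply $\ell_i$ for the identity to telescope is the \emph{reciprocal} $h_0/S_n^i(h_0)=\prod_{j=0}^{i-1}S_n^j(b/a)$, not $S_n^i(h_0)/h_0$ (check: with $\tilde\ell_i=w\,\ell_i\prod_{j=0}^{i-1}S_n^j(b/a)$ one gets $\sum_i\tilde\ell_i S_n^i(q)S_n^i(h_0)=w\,L(q)h_0$). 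Hence the denominators to be cleared involve $a$, not $b$; taking $w=\prod_{j=0}^{r-1}S_n^j(a)$ gives $\tilde\ell_i=\ell_i\prod_{j=0}^{i-1}S_n^j(b)\prod_{j=i}^{r-1}S_n^j(a)$ of degree at most $d+i\deg_n b+(r-i)\deg_n a\le d+r\max\{\deg_n a,\deg_n b\}$. Your choice $\prod_{j=0}^{r-1}S_n^j(b)$ leaves the $S_n^j(a)$ in the denominators, so nothing is actually cleared; the final bound happens to survive because of the $\max$, but the construction as written does not. Second, your justification that $\deg_k\tilde\ell_i=0$ is not valid as stated: the hypotheses allow $a$ and $b$ individually to involve $k$ (only their \emph{ratio} is constrained), so multiplying by shifts of $a$ or $b$ could leave $k$ in the coefficients and the result would not be a telescoper. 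The correct argument, which the paper makes explicit and which you gesture at, is that $S_k(a/b)=S_k(S_n(h_0))/S_k(h_0)=S_n(S_k(h_0))/S_k(h_0)=S_n(h_0)/h_0=a/b$, so the reduced form of $a/b$ has $k$-free numerator and denominator whose $n$-degrees do not exceed $\deg_n a$ and $\deg_n b$; one must clear denominators of that reduced form. With these two repairs your argument coincides with the paper's proof.
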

\begin{proof}
  Write $L=\ell_0+\ell_1S_n+\cdots+\ell_rS_n^r$ and let $C\in\K(n,k)$ be a certificate
  for~$L$ and~$q$, so $L(q)=S_k(Cq)-Cq$. For $i=0,\dots,r$, let
  \[
    \tilde\ell_i := \ell_i \frac ba S_n\Bigl(\frac ba\Bigl)\cdots S_n^{i-1}\Bigl(\frac ba\Bigr)
  \]
  and $\tilde L:=\tilde\ell_0+\tilde\ell_1 S_n + \cdots + \tilde\ell_rS_n^r$. Then
  \[
    \tilde L(qh_0) = L(q)h_0 = (S_k(Cq)-Cq)h_0 = S_k(Cqh_0)-Cqh_0.
  \]
  Because of
  \[
    S_k\Bigl(\frac ab\Bigr) = \frac{S_k(S_n(h_0))}{S_k(h_0)}=
    \frac{S_n(S_k(h_0))}{S_k(h_0)}=\frac{S_n(h_0)}{h_0}=\frac ab,
  \]
  the operator $\tilde L$ is free of~$k$. Thus, after clearing denominators, $\tilde L$ is
  a telescoper for $h$ with coefficients of degree at most $d+r\max\{\deg_n a,\deg_n b\}$.
\end{proof}

{}From now on, we assume that $h$ is at the same time a proper hypergeometric term and a rational
function, or equivalently, that $h$ is a rational function whose denominator factors into integer-linear
factors. Le~\cite{le03} gives a precise description of the structure of telescopers in this case, and he
proposes an algorithm different from Zeilberger's for computing them. Our degree estimate is derived
following the steps of his algorithm, so we start by briefly summarizing the main steps of Le's approach.

\subsection{Le's Algorithm}\label{sec:le}

Given a rational proper hypergeometric term~$h$, Le's algorithm computes a telescoper $L$ for $h$ as follows.

\begin{enumerate}
\item Compute $g\in\K(n,k)$ and polynomials $p,q\in\K[n,k]$ with $\gcd(q,S_k^i(q))=1$ for all $i\in\set Z\setminus\{0\}$
  such that
  \[
    h = S_k(g) - g + \frac pq.
  \]
  Then an operator $L$ is a telescoper for $h$ if and only if $L$ is a telescoper for $\frac pq$.
  Abramov~\cite{abramov95a} and Paule~\cite{paule95} explain how to compute such a decomposition.
\item Compute a polynomial $u\in\K[n]$, operators $V_1,\dots,V_s$ in $\K[n][S_n]$, and
  rational functions $f_1,\dots,f_s$ of the form $f_i=(a_in+a_i'k+a_i'')^{-e_i}$ ($i=1,\dots,s$) such that
  \[
    \frac pq = \frac1u\sum_{i=0}^s V_i(f_i).
  \]
  Such data always exists according to Lemma~5 in \cite{le03} in combination with the assumption $\gcd(q,S_k^i(q))=1$
  ($i\in\set Z\setminus\{0\}$). It can be further assumed that the $f_i$ are chosen such that $a_i'>0$, $e_i>0$,
  $\gcd(a_i,a_i')=1$ for all~$i$, and
  \[
    \Bigl(\frac{a_i}{a_i'} - \frac{a_j}{a_j'}\Bigr)n +
    \Bigl(\frac{a_i''}{a_i'} - \frac{a_j''}{a_j'}\Bigr) \not\in\set Z
  \]
  for all $i\neq j$ with $e_i=e_j$.
\item For $i=1,\dots,s$, compute an operator $L_i\in\K(n)[S_n]$ such that $S_n^{a_i'}-1$ is a right divisor
  of $L_i (\frac1uV_i)$. It follows from Le's Lemma~4 that the operators $L_i$ with this property are precisely
  the telescopers of the rational functions~$V_i(f_i)$.
\item Compute a common left multiple $L\in\K[n][S_n]$ of the operators $L_1,\dots,L_s$.
  Then $L$ is a telescoper for~$h$.
\end{enumerate}

The main part of the computational work happens in the last two steps. It
therefore appears sensible to assume in the following degree analysis that we
already know the data $u, V_1,\dots,V_s, f_1,\dots,f_s$ computed in step~2, and
to express the degree bounds in terms of their degrees and coefficients rather
than in terms of the degrees of numerator and denominator of~$h$, say.

\subsection{Counting Variables and Equations}

Also in the present case, the degree estimate is obtained by balancing the number of variables and
equations of a certain linear system over~$\K$. The linear system we consider originates from a
particular way of executing steps 3 and~4 of the algorithm outlined above.

\begin{theorem}\label{thm:curve:rat}
  Let $u\in\K[n]$ and let $V_1,\dots,V_s\in\K[n][S_n]$
  be operators of degree~$\delta_i$ ($i=1\dots,s$).
  Let $f_i=(a_in+a_i'k+a_i'')^{-e_i}$ for some $a_i''\in\K$,
  $a_i,a_i'\in\set Z$ with $a_i'>0$ and $\gcd(a_i,a_i')=1$, $e_i>0$, suppose
  \[
    \Bigl(\frac{a_i}{a_i'} - \frac{a_j}{a_j'}\Bigr)n +
    \Bigl(\frac{a_i''}{a_i'} - \frac{a_j''}{a_j'}\Bigr) \not\in\set Z
  \]
  for all $i\neq j$ with $e_i=e_j$. Let $h=\frac1u\sum_{i=1}^s V_i(f_i)$.
  Then for every $r\geq\sum_{i=1}^s a_i'$ and every
  \[
    d>\frac{\displaystyle-r-1+\sum_{i=1}^s a_i'\delta_i}
           {\displaystyle r+1-\sum_{i=1}^sa_i'} + \deg_n u
  \]
  there exists a telescoper $L$ for $h$ of order~$r$ and degree~$d$.
\end{theorem}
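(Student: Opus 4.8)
The plan is to reduce the theorem to the case $u=1$ by a substitution in the coefficients of the telescoper, and then to balance the number of variables against the number of equations in the linear system that arises when we force $S_n^{a_i'}-1$ to right-divide $\tilde L\cdot V_i$ for every $i$. Write $g:=\sum_{i=1}^s V_i(f_i)$, so that $h=\frac1u g$.

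For the reduction I would first observe that if $\tilde L=\sum_{j=0}^r\tilde\ell_j S_n^j$ is any telescoper for $g$, then $L:=\sum_{j=0}^r\tilde\ell_j(n)\,u(n+j)\,S_n^j$ is a telescoper for $h$ with the same certificate. Indeed, since $S_n^j(\frac1u g)=\frac1{u(n+j)}S_n^j(g)$, the factors $u(n+j)$ cancel and one computes $L(h)=\tilde L(g)$. As $\deg_n\bigl(\tilde\ell_j\,u(n+j)\bigr)\le\deg_n\tilde\ell_j+\deg_n u$, a telescoper for $g$ of order $r$ and degree $\tilde d:=d-\deg_n u$ yields a telescoper for $h$ of order $r$ and degree $d$. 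It therefore suffices to prove the statement for $u=1$, with $d$ replaced by $\tilde d$ and the degree bound replaced by $\tilde d>\frac{-r-1+\sum_i a_i'\delta_i}{r+1-\sum_i a_i'}$.

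For the case $u=1$ I would use the characterization behind Le's algorithm in its sufficient direction only. The basic identity is $S_n^{a_i'}(f_i)=S_k^{a_i}(f_i)$, which gives $(S_n^{a_i'}-1)(f_i)=(S_k^{a_i}-1)(f_i)$, a $k$-summable term. Consequently, whenever an operator $\tilde L\in\K(n)[S_n]$ satisfies $\tilde L\cdot V_i=T_i'\,(S_n^{a_i'}-1)$ for some $T_i'$, the term $\tilde L(V_i(f_i))=T_i'\bigl((S_k^{a_i}-1)(f_i)\bigr)$ is $k$-summable, because $T_i'$ commutes with $S_k$. By linearity, an operator $\tilde L$ for which $S_n^{a_i'}-1$ right-divides $\tilde L\cdot V_i$ for every $i$ telescopes $g=\sum_i V_i(f_i)$; no use of the genericity condition on the $f_i$ is needed for this direction, as we only build a common left multiple of individual telescopers. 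I would then make the ansatz $\tilde L=\sum_{j=0}^r\sum_{p=0}^{\tilde d}\tilde\ell_{j,p}\,n^p S_n^j$, which carries $(r+1)(\tilde d+1)$ undetermined coefficients over $\K$, and impose, for each $i$, that the right remainder of $\tilde L\cdot V_i$ modulo $S_n^{a_i'}-1$ be zero.

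The remainder is obtained through the reduction $c(n)S_n^m\equiv c(n)S_n^{m-a_i'}$, so it is an operator $\sum_{t=0}^{a_i'-1}R_t(n)S_n^t$ whose coefficients $R_t$ are the sums of the coefficients of $\tilde L\cdot V_i$ over all $S_n$-powers congruent to $t$ modulo $a_i'$. The heart of the argument is a degree estimate showing that each $R_t$ has $n$-degree small enough that requiring all of them to vanish imposes at most $a_i'(\tilde d+\delta_i)$ scalar equations, hence $\sum_i a_i'(\tilde d+\delta_i)$ in total. This bookkeeping — tracking how the $n$-degrees of the products $\tilde\ell_j(n)\,v_{i,l}(n+j)$ behave under the folding $m\mapsto m\bmod a_i'$, and checking that the folding creates no higher-degree contributions — is the step I expect to be the main obstacle, since it is here that $\delta_i$ enters and the constants must be controlled exactly. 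Granting the bound, the homogeneous system has a nonzero solution as soon as $(r+1)(\tilde d+1)>\sum_i a_i'(\tilde d+\delta_i)$, and a direct rearrangement of this inequality — legitimate because $r\ge\sum_i a_i'$ makes $r+1-\sum_i a_i'\ge1$ positive — is exactly $\tilde d>\frac{-r-1+\sum_i a_i'\delta_i}{r+1-\sum_i a_i'}$. A nonzero solution gives a nonzero $\tilde L$ of order at most $r$ and degree at most $\tilde d$, whence, after the coefficient substitution above, a telescoper for $h$ of order at most $r$ and degree at most $d$; left-multiplication by a suitable power of $S_n$ raises the order to exactly $r$ without increasing the degree, completing the proof.
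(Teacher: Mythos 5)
Your overall strategy coincides with the paper's: reduce to $u=1$ via the ansatz $L=\tilde L u$ (your coefficient substitution $\tilde\ell_j(n)\,u(n+j)$ is literally the operator product $\tilde L\cdot u$), reduce telescoping to right-divisibility of $\tilde L V_i$ by $S_n^{a_i'}-1$, and compare variables against equations in the resulting linear system over~$\K$. Your self-contained justification of the sufficiency direction via $S_n^{a_i'}(f_i)=S_k^{a_i}(f_i)$ is a nice touch; the paper simply cites Le's Lemma~4 here. The only structural difference is that you eliminate the cofactors by passing to remainders modulo $S_n^{a_i'}-1$, while the paper keeps cofactors $R_i$ as additional unknowns of order $r+\rho_i-a_i'$ and degree $\tilde d+\delta_i$; the two bookkeepings are equivalent, since the paper's ``equations minus $R_i$-variables'' for each $i$ is exactly your count of remainder equations.

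However, the step you flag as the main obstacle is a genuine gap, and the natural way to close it does not produce the constant you need. The coefficient of $S_n^t$ in the remainder is the sum of the coefficients of $S_n^m$ in $\tilde L V_i$ over all $m\equiv t\pmod{a_i'}$; each of these has $n$-degree at most $\tilde d+\delta_i$, hence so does the sum, and a polynomial of degree at most $\tilde d+\delta_i$ contributes $\tilde d+\delta_i+1$ scalar equations, not $\tilde d+\delta_i$. The honest count is therefore $\sum_{i} a_i'(\tilde d+\delta_i+1)$ equations, and the solvability condition $(r+1)(\tilde d+1)>\sum_i a_i'(\tilde d+\delta_i+1)$ rearranges to $\tilde d>\bigl(-r-1+\sum_i a_i'(\delta_i+1)\bigr)/\bigl(r+1-\sum_i a_i'\bigr)$, which exceeds the bound in the theorem by $\sum_i a_i'/(r+1-\sum_i a_i')$. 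You should know that the paper's own proof carries exactly the same discrepancy: its variable and equation counts also reduce to $(r+1)(\tilde d+1)>\sum_i a_i'(\tilde d+\delta_i+1)$, and the claimed ``equivalence'' with the displayed bound silently drops the term $\sum_i a_i'$. So, once your count is corrected, your proposal establishes the same marginally weaker inequality that the paper's argument actually proves; neither derivation yields the stated bound verbatim, and for the theorem as printed your argument (like the paper's) falls short by that small additive term.
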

\begin{proof}
  According to Le's algorithm, it suffices to find some $L\in\K[n][S_n]$ and operators
  $R_i\in\K(n)[S_n]$ with the property that $L(\frac1uV_i)= R_i(S_n^{a_i'}-1)$ for all~$i$.

  Denote by $\rho_i$ the order of~$V_i$.  Writing $\tilde d:=d-\deg_n u$, we
  make an ansatz $L=\tilde L u$ with
  \[
   \tilde L=\sum_{i=0}^r \sum_{j=0}^{\tilde d} \ell_{i,j}n^j S_n^i
  \]
  so that $L$ has degree~$d$ and $L \frac1uV_i = \tilde L V_i$ ($i=1,\dots,s$).
  It thus remains to construct operators $R_i\in\K[n][S_n]$ with $\tilde L V_i = R_i(S_n^{a_i'}-1)$.
  Since $L\tilde V_i$ has order $r+\rho_i$ and degree $\tilde d+\delta_i$, we consider
  ansatzes for the $R_i$ of order $r+\rho_i-a_i'$ and degree $\tilde d+\delta_i$, respectively,
  because $S_n^{a_i'}-1$ has order $a_i'$ and degree~$0$.
  Then we have altogether
  \[
    (r+1)(\tilde d+1) + \sum_{i=1}^s (r+\rho_i-a_i'+1)(\tilde d+\delta_i+1)
  \]
  variables in $\tilde L$ and the~$R_i$, and comparing coefficients with respect to $n$ and $S_n$
  in all the required identities $\tilde L V_i=R_i(S_n^{a_i'}-1)$ leads to a linear system with
  \[
    \sum_{i=1}^s (r+\rho_i+1)(\tilde d+\delta_i+1)
  \]
  equations. This system will have a nontrivial solution whenever the number of variables
  exceeds the number of equations. Under the assumption $r\geq\sum_{i=1}^s a_i'$, the inequality
  \begin{alignat*}1
    & (r+1)(\tilde d+1)+ \sum_{i=1}^s (r+\rho_i-a_i'+1)(\tilde d+\delta_i+1) \\
    &\qquad{} > \sum_{i=1}^s (r+\rho_i+1)(\tilde d+\delta_i+1)
  \end{alignat*}
  is equivalent to
  \[
    \tilde d > \frac{-r-1 + \sum_{i=1}^s a_i'\delta_i}{r+1-\sum_{i=1}^sa_i'}.
  \]
  This completes the proof.
\end{proof}

\subsection{Examples and Consequences}

\begin{example}\label{ex:rat}
  \begin{enumerate}
  \item The rational function
    \[
      h = \frac{(2n-3k)(3n-2k)^2}{(n+k+2)(n+2k+1)(2n+k+1)(3n+k+1)}
    \]
    can be written in the form $h=\frac1 u\sum_{i=1}^4 V_i(f_i)$ where
    $u=(n-1)n(n+3)(2n-1)(3n+1)(5n+1)$,
    the $f_i$ are such that $a_1'=a_2'=a_3'=1$, $a_4'=2$, and
    the $V_i$ are such that $\delta_1=\cdots=\delta_4=6$.
    Therefore, Theorem~\ref{thm:curve:rat} predicts a telescoper of order~$r$ and degree~$d$
    whenever $r\geq5$ and
    \[
      d > \frac{29-r}{r-4} + 6.
    \]
    This curve together with the region of all points $(r,d)$ for which 
    a telescoper of order~$r$ and degree~$d$ exists is shown in the left figure below.
  \item The corresponding picture for the rational function
    \[
      h = \frac{(n-k+1)^2(2n-3k+5)}{(n+k+3)(n+k+5)(n+2k+1)(2n+k+1)^2}
    \]
    is shown in the figure below on the right. This input can be written in the form
    \[
      h = S_k(g) - g + \frac1u\sum_{i=1}^4 V_i(f_i)
    \]
    with $g=\frac{10(n+3)^2(n+4)(2n+2k+7)}{(n-4)^2(n+9)(n+k+3)(n+k+4)}$,
    $u=(3n+1)^2(n-4)^2(n-2)^2(n+5)(n+9)$,
    the $f_i$ such that $a_1'=a_2'=a_3'=1$, $a_4'=2$, and
    the $V_i$ such that $\delta_1=8$, $\delta_2=\delta_3=\delta_4=7$.
    According to Theorem~\ref{thm:curve:rat}, we therefore expect a telescoper for $h$
    of order~$r$ and degree~$d$ whenever $r\geq5$ and
    \[
      d > \frac{35-r}{r-4} + 8.
    \]
    In this example, the estimate is not as tight as in the previous one.
  \end{enumerate}
\end{example}



\centerline{\includegraphics{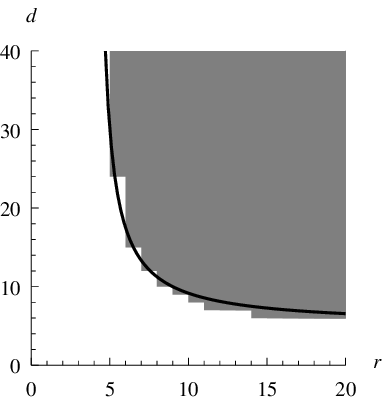}\hfil\includegraphics{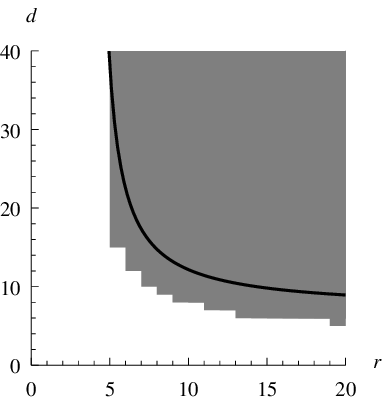}}

\smallskip

Again, it is an easy matter to specialize the general degree bound to a degree estimate
for a low order telescoper, or to an order estimate for a low degree telescoper.

\begin{corollary}
  With the notation of Theorem~\ref{thm:curve:rat}, $h$ admits a telescoper of
  order $r=\sum_{i=1}^s a_i'$ and degree $d=\deg_n u + \sum_{i=1}^s (\delta_i-1)a_i'$
  as well as a telescoper of order $r=\sum_{i=1}^s a_i\delta_i$ and degree $d=\deg_nu$.
\end{corollary}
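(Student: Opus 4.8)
The plan is to prove this exactly as the analogous corollary after Theorem~\ref{thm:trans} was handled: substitute each of the two proposed pairs $(r,d)$ into the hyperbola condition of Theorem~\ref{thm:curve:rat} and check that the required strict inequality holds, so that together with the admissibility condition $r\geq\sum_{i=1}^s a_i'$ the theorem supplies a telescoper of that order and degree. To organize the arithmetic I would abbreviate $R:=\sum_{i=1}^s a_i'$ and $D:=\sum_{i=1}^s a_i'\delta_i$, so that the bound reads
\[
  d>\frac{D-r-1}{r+1-R}+\deg_n u .
\]
Since $r\geq R$ forces the denominator $r+1-R\geq 1>0$, this is the same as $(d-\deg_n u)(r+1-R)>D-r-1$. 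The two proposed points are the two natural corners of this hyperbola, and I would dispose of them separately.

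For the first pair I take $r=R$. Then the denominator equals $1$ and the bound collapses to $d>D-R-1+\deg_n u$. Using $D-R=\sum_{i=1}^s a_i'(\delta_i-1)$, this reads $d>\deg_n u+\sum_{i=1}^s(\delta_i-1)a_i'-1$, so the proposed degree $d=\deg_n u+\sum_{i=1}^s(\delta_i-1)a_i'$ beats the bound by exactly $1$; and $r=R$ meets $r\geq R$ trivially. This corner is therefore tight, matching the boundary behaviour in the first item of Example~\ref{ex:rat}, where $r=5$ and $\deg_n u=6$ force $d=31$, one unit above the bound $30$.

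For the second pair I take $d=\deg_n u$, i.e.\ $\tilde d:=d-\deg_n u=0$. The condition reduces to $0>\frac{D-r-1}{r+1-R}$, hence—the denominator being positive—to $D-r-1<0$, that is $r\geq D=\sum_{i=1}^s a_i'\delta_i$. Choosing $r=\sum_{i=1}^s a_i'\delta_i$ makes the numerator equal to $-1<0$, so the inequality holds; and this order is again tight in Example~\ref{ex:rat} (there $\sum a_i'\delta_i=30$, $\deg_n u=6$ lie exactly on the curve). It remains to verify admissibility $r\geq R$, which holds as soon as each $\delta_i\geq 1$, since then $\sum a_i'\delta_i\geq\sum a_i'=R$.

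The step I expect to be the real obstacle is pinning down the second corner correctly. My sign analysis shows the numerator becomes nonpositive \emph{precisely} when $r\geq\sum_{i=1}^s a_i'\delta_i$, so the clean choice that makes the verification go through is $r=\sum_{i=1}^s a_i'\delta_i$ rather than $\sum_{i=1}^s a_i\delta_i$; indeed, since only $\gcd(a_i,a_i')=1$ and $a_i'>0$ are assumed, one may have $a_i<a_i'$, so $\sum a_i\delta_i$ need not even satisfy $r\geq R$. I would therefore read the second order as $\sum_{i=1}^s a_i'\delta_i$, which is what the example confirms. The only residual subtlety is the degenerate possibility $\delta_i=0$ (an operator $V_i$ with $n$-constant coefficients), which can break $\sum a_i'\delta_i\geq R$; I would either record $\delta_i\geq 1$ as holding in the intended setting or replace the claimed order by $\max\{R,\sum a_i'\delta_i\}$. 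Apart from this bookkeeping, the proof is the routine substitution described above.
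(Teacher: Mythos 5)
Your proof is correct and is exactly the verification that the paper's one-line proof (``clear by checking that the proposed choices for $r$ and $d$ are consistent with the bounds'') leaves to the reader: substitute each pair into the hyperbola bound of Theorem~\ref{thm:curve:rat}, use that $r+1-\sum_i a_i'\geq 1$, and confirm the strict inequality. Your observation that the second order should read $\sum_{i=1}^s a_i'\delta_i$ rather than $\sum_{i=1}^s a_i\delta_i$ is well taken --- the stated formula appears to be a typo, since only $\sum_i a_i'\delta_i$ makes the numerator nonpositive (the $a_i$ are merely integers and could even make $\sum_i a_i\delta_i$ negative), and the value $30=\sum_i a_i'\delta_i$ in Example~\ref{ex:rat} confirms this reading; your caveat about $\delta_i=0$ is likewise a genuine, if minor, gap in the statement rather than in your argument.
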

\begin{proof}
  Clear by checking that the proposed choices for $r$ and~$d$ are consistent with
  the bounds in Theorem~\ref{thm:curve:rat}.
\end{proof}

Also like in the non-rational case, the bounds for the degrees of the telescopers can be used for
deriving bounds on the computational cost for computing them. In the present situation,
let us assume for simplicity that the cost of steps 1 and~2 of Le's algorithm is negligible,
or equivalently, that the input~$h$ is of the form $\frac1u\sum_{i=0}^s V_i(f_i)$ with $V_i\in\K[n][S_n]$.
We shall analyze the algorithm which carries out steps 3 and~4 of Section~\ref{sec:le} in one
stroke by making an ansatz over $\K(n)$ for an operator~$L=\ell_0+\ell_1S_n+\cdots+\ell_rS_n^r$, computing
the right reminders of $L\frac1uV_i$ with respect to $S_n^{a_i'}-1$ and equating their coefficients to
zero. We assume, as before, that the resulting linear system is solved using an
algorithm whose runtime is linear in the output degree and cubic in the matrix size. Then
the algorithm requires $\O(r^3 d)$ operations in~$\K$.

\begin{theorem}\label{thm:rat:compl}
  Let $u\in\K[n]$, $V_1,\dots,V_s\in\K[n][S_n]$, and $f_1,\dots,f_s\in\K(n,k)$ be as in Theorem~\ref{thm:curve:rat}
  and consider $h=\frac1u\sum_{i=1}^s V_i(f_i)$.
  Suppose that $\kappa\in\set R$ is a constant such that degree $t$ solutions of a linear system
  with $m$ variables and at most $m$ equations over $\K(n)$ can be computed with $\kappa m^3 t$ operations
  in~$\K$. Assume $\delta_1=\cdots=\delta_s=:\delta>0$ and $a_1'=a_2'=\cdots=a_s'=:a'$ are fixed.
  Then:
  \begin{enumerate}
  \item A telescoper of order $r=a's$ can be computed using
  \[
    a'{}^4 (\delta-1) \kappa\,s^4 + \O(s^3)
  \]
  operations in $\K$.
  \item\label{it:rat:2} If $\alpha>1$ is some constant and $r$ is chosen such that
  $r=\alpha a's + \O(1)$
  then a telescoper of order~$r$ can be computed using
  \[
   \frac{\alpha^3}{\alpha-1}a'{}^3 (\delta-1+(\alpha-1)\deg_nu)\kappa\, s^3 + \O(s^2)
  \]
  operations in~$\K$.
  \end{enumerate}
  In particular, a telescoper for $h$ can be computed in polynomial time.
\end{theorem}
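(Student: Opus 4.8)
The plan is to reuse, essentially verbatim, the scheme of the proof of Theorem~\ref{thm:trans:compl}: I would combine the order--degree curve supplied by Theorem~\ref{thm:curve:rat} with the cost of a single linear-algebra solve over~$\K(n)$, obtain one cost function $C(r,d)$, and then read off the two claimed estimates by asymptotically expanding $C$ at the two prescribed orders as $s\to\infty$.

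First I would specialize the curve of Theorem~\ref{thm:curve:rat} to the present hypotheses $\delta_1=\cdots=\delta_s=\delta$ and $a_1'=\cdots=a_s'=a'$. Then $\sum_i a_i'=a's$ and $\sum_i a_i'\delta_i=a'\delta s$, so for every $r\ge a's$ there is a telescoper of order~$r$ and degree~$d$ whenever
\[
  d>f(r):=\frac{a'\delta s-r-1}{r+1-a's}+\deg_n u .
\]
Next I would fix the cost of the algorithm sketched just before the theorem. The ansatz $L=\ell_0+\cdots+\ell_rS_n^r$ over~$\K(n)$ has $m=r+1$ unknowns; reducing each $L\frac1uV_i$ modulo $S_n^{a_i'}-1$ and equating the coefficients of the remainders to zero contributes $\sum_i a_i'=a's$ equations, which is $\le r+1=m$ under the standing assumption $r\ge a's$, so the clause ``at most $m$ equations'' is satisfied. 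Requesting a solution of the degree~$d$ guaranteed by the curve makes the output degree $t=d$, so by the hypothesis on~$\kappa$ the whole computation costs
\[
  C(r,d)=\kappa\,(r+1)^3\,d
\]
operations in~$\K$.

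I would then evaluate $C$ at $d=f(r)+1$ for the two orders, exactly as in Theorem~\ref{thm:trans:compl}. For part~1 the choice $r=a's$ makes the denominator $r+1-a's$ equal to~$1$, whence $f(a's)=a'(\delta-1)s-1+\deg_n u$ and $d=a'(\delta-1)s+\deg_n u$; substituting and expanding $(a's+1)^3$ yields $a'^4(\delta-1)\kappa s^4+\O(s^3)$. For part~2, writing $r=\alpha a's+\O(1)$ gives $f(r)\to\frac{\delta-1}{\alpha-1}-1+\deg_n u$ as $s\to\infty$, so $d\to\frac{\delta-1}{\alpha-1}+\deg_n u$ stays bounded; multiplying by $(\alpha a's)^3$ and simplifying $\alpha^3 a'^3\bigl(\tfrac{\delta-1}{\alpha-1}+\deg_n u\bigr)$ produces $\frac{\alpha^3}{\alpha-1}a'^3\bigl(\delta-1+(\alpha-1)\deg_n u\bigr)\kappa s^3+\O(s^2)$. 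Polynomial time then follows from the part~1 estimate alone, since for $r=a's$ both $r$ and~$d$ are polynomially bounded in the input size, hence so is~$C(r,d)$.

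The main obstacle is not any single computation but pinning down the constants in the cost model. One must check that the matrix has exactly $m=r+1$ columns --- the unknowns are the coefficients~$\ell_i$ of~$L$, while the quotients~$R_i$ arise from the divisions and are \emph{not} additional unknowns --- and, more delicately, that the degree fed into $\kappa m^3 t$ is the full telescoper degree $t=d$ delivered by Theorem~\ref{thm:curve:rat}, including its $\deg_n u$ contribution, rather than the degree $d-\deg_n u$ of the cofactor~$\tilde L$. This distinction is invisible in part~1, where it perturbs only the $\O(s^3)$ tail, but it is decisive in part~2: there the degree remains bounded as $s\to\infty$, so the $\deg_n u$ carried inside $t=d$ is precisely what yields the summand $(\alpha-1)\deg_n u$ in the leading coefficient. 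Once $C(r,d)=\kappa(r+1)^3d$ is justified, the two asymptotic expansions are routine.
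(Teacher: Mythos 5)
Your proposal is correct and follows essentially the same route as the paper: specialize the curve of Theorem~\ref{thm:curve:rat} to $f(r)=\frac{sa'\delta-r-1}{r+1-sa'}+\deg_nu$, take the cost function $C(r,d)=\kappa r^3d$ (your $(r+1)^3$ differs only in the $\O$-terms), and expand $C(a's,f(a's)+1)$ and $C(\alpha a's,f(\alpha a's)+1)$ as $s\to\infty$. Your additional accounting of the $m=r+1$ unknowns versus the $\sum_i a_i'\le r+1$ equations, and the observation that the degree entering the cost is the full $d$ including $\deg_nu$, are consistent with (and slightly more explicit than) the paper's argument.
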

\begin{proof}
  According to the first estimate stated in Theorem~\ref{thm:curve:rat}, for every $r\geq a's$
  there exists a telescoper of order~$r$ and degree~$d$ for any
  \[
    d > f(r) := \frac{sa'\delta-r-1}{r+1-sa'} + \deg_nu.
  \]
  By assumption, such a telescoper can be computed using no more than $C(r,d) := \kappa r^3 d$
  operations in~$\K$. The claim now follows from the asymptotic expansions of $C(a's, f(a's)+1)$
  and $C(\alpha a's, f(\alpha a's)+1)$ for $s\to\infty$, respectively.
\end{proof}

When $\deg_nu=0$, the leading coefficient in part~\ref{it:rat:2} is minimized
for $\alpha=3/2$. This suggests that when $s$ is large and all the $\delta_i$,
and $a_i'$ are approximately equal, the order of the cheapest operator
exceeds the minimal expected order by around~50\%.

It must not be concluded from a literal comparison of the exponents in
Theorems~\ref{thm:trans} and~\ref{thm:curve:rat} that Le's algorithm is faster
than Zeilberger's, because $\tau$ in Theorem~\ref{thm:trans} and $s$ in
Theorem~\ref{thm:curve:rat} measure the size of the input differently. Nevertheless,
it is plausible to expect that Le's algorithm is faster, because it finds
the telescopers without also computing a (potentially big) corresponding
certificate. Our main point here is not a comparison of the two approaches, but
rather the observation that both of them admit a degree analysis which fits to the
general paradigm that increasing the order can cause a degree drop which is
significant enough to leave a trace in the computational complexity.

It can also be argued that the situations considered in
Theorems~\ref{thm:trans:compl} and~\ref{thm:rat:compl} are chosen somewhat
arbitrarily ($\vartheta$~and $\nu$ growing while $\mu$~remains fixed; resp.\
$s$~growing while all the $\delta_i$ and $a_i'$ remain fixed). Indeed, it would
be wrong to take these theorems as an advice which telescopers are most easily
computed for a particular input at hand. Instead, in order to speed up an actual
implementation, one should let the program calculate the optimal choice for~$r$
from the degree estimates given Theorems~\ref{thm:trans} and~\ref{thm:curve:rat}
with the particular parameters of the input.

Unfortunately, we are not able to illustrate the speedup obtained in this way by
an actual runtime comparison for a concrete example, because for examples which
can be handled on currently available hardware, the computational cost turns out
to be minimized for the least order operator. But already for examples which are
only slightly beyond the capacity of current machines, the degree predictions in
Theorems~\ref{thm:trans} and~\ref{thm:curve:rat} indicate that computing the
telescoper of order one more than minimal will start to give an advantage. We
therefore expect that the results presented in this paper will contribute to the
improvement of creative telescoping implementations in the very near future.


\end{document}